\newcommand{\subparagraph}{} 
\newcommand{\R}{\mathbb{R}}
\newcommand{\N}{\mathbb{N}}
\DeclareMathOperator*{\argmin}{arg\,min}    
\theoremstyle{plain}
\newtheorem{thm}{\protect\theoremname}
\newtheorem{lem}[thm]{Lemma}
\theoremstyle{definition}
\newtheorem{defn}[thm]{\protect\definitionname}
\newtheorem{remark}{Remark}
\theoremstyle{plain}
\theoremstyle{plain}
\theoremstyle{definition}
\providecommand{\corollaryname}{Corollary}
\providecommand{\propname}{Proposition}
\providecommand{\definitionname}{Definition}
\providecommand{\theoremname}{Theorem}
\providecommand{\assumptionname}{Assumption}
\providecommand{\corollaryname}{Corollary}
\providecommand{\definitionname}{Definition}
\providecommand{\theoremname}{Theorem}
\newcommand{\E}{\mathbb{E}}
\renewcommand{\N}{\mathbb{N}}
\renewcommand{\R}{\mathbb{R}}
\renewcommand{\S}{\mathbb{S}}
\newcommand{\Ac}{\mathcal{A}}
\newcommand{\Cc}{\mathcal{C}}
\newcommand{\Nc}{\mathcal{N}}
\newcommand{\Pc}{\mathcal{P}}
\newcommand{\Qc}{\mathcal{Q}}
\def\algbackskip{\hskip\dimexpr-\algorithmicindent+\labelsep}
\def\LState{\State \algbackskip}
\begin{document}

\title{New Algorithms and Improved Guarantees for One-Bit Compressed Sensing on Manifolds}
\author{
Mark A.~Iwen\(^\dagger\), Eric Lybrand\(^\ddagger\), Aaron A.~Nelson\(^\ddagger\)*, Rayan Saab\(^\ddagger\)
    \thanks{\(^\dagger\)Mathematics and CMSE Departments at Michigan State University. 
   \(^\ddagger\)Mathematics Department at University of California, San Diego.
    *The views expressed in this article are those of the authors and do not reflect the official policy or position of the U.S.~Air Force, Department of Defense, or U.S.~Government.
    }}

\maketitle

\noindent\begin{abstract}
    We study the problem of approximately recovering signals on a manifold from one-bit linear measurements drawn from either a Gaussian ensemble, partial circulant ensemble, or bounded orthonormal ensemble and quantized using \(\Sigma\Delta\) or distributed noise shaping schemes.
    We assume we are given a Geometric Multi-Resolution Analysis, which approximates the manifold, and we propose a convex optimization algorithm for signal recovery.
    We prove an upper bound on the recovery error which outperforms prior works that use memoryless scalar quantization, requires a simpler analysis, and extends the class of measurements beyond Gaussians.
    Finally, we illustrate our results with numerical experiments.
\end{abstract}
\noindent\begin{IEEEkeywords}
    Compressed sensing, quantization, one-bit, manifold, rate-distortion
\end{IEEEkeywords}

\section{Introduction}
Compressed sensing \cite{candes2006stable, donoho2006compressed} demonstrates that structured high dimensional signals such as sparse vectors or low-rank matrices can be recovered from few random linear measurements.
Recovery is typically formulated as a convex optimization problem whose minimizer cannot be expressed analytically and must be solved for using numerical algorithms running on digital devices.
Thus, it is necessary to consider the effect of quantization in the design of the recovery algorithms.
Indeed, sparse vector recovery and low-rank matrix recovery have been studied in the presence of various quantization schemes \cite{gunturk2010sigma,huynh2018fast,jacques2013robust,lybrand2017quantization,plan2013one}.
We look to extend these results to account for those structured signals that lie on a compact, low-dimensional submanifold of \(\R^N\) for which we have a Geometric Multi-Resolution Analysis (GMRA) \cite{allard2012multi}.
Our work is motivated  by  the results of Iwen et al.~in \cite{iwen2018recovery} where they assume memoryless scalar quantized Gaussian measurements, and we provide better error bounds that hold for a wider class of measurement ensembles.

As in \cite{iwen2018recovery}, a key component of our technique is the GMRA which approximates the manifold at various levels of refinement.
At each level the GMRA is a collection of approximate tangent spaces about certain known "centers", and the quality of the approximation improves with every level.
Unlike in \cite{iwen2018recovery}, the quantization schemes that we use are  \(\Sigma\Delta\) or distributed noise shaping methods (see, e.g., \cite{gunturk2010sigma,huynh2018fast}) and the compressed sensing measurements that our results apply to include those drawn from Gaussian ensembles, partial circulant ensembles (PCE) or bounded orthonormal ensembles (BOE) (see \cite{huynh2018fast} for precise definitions).
Our proposed reconstruction method is summarized in Algorithm \ref{alg: main}.
This simple algorithm first finds a GMRA center that quantizes to a bit sequence close to the quantized measurements, where "closeness" is determined using a pseudo-metric that respects the quantization; it then optimizes over all points in the associated approximate tangent space to enforce, as much as possible, the consistency of the quantization. 
Using the results of \cite{huynh2018fast} we prove that the quantization error associated with our proposed reconstruction algorithm decays polynomially or exponentially as a function of the number of measurements, depending on the quantization scheme.
This greatly improves on the sub-linear error decay associated with scalar quantization in \cite{iwen2018recovery}.
%
%

\section{Background}
In \cite{iwen2018recovery}, Iwen et al.~study the case where measurements \(y = Ax\) of a signal \(x\) on a manifold \(K\subset \S^{N-1}\) are quantized via memoryless scalar quantization (MSQ).
For a discrete set \(\Ac\) and \(\Qc(x) := \argmin_{z\in \Ac}|x - z|\), the measurements are 
\begin{align*}
    q_j = \Qc(\langle a_j, x \rangle), \quad j = 1, \hdots, m.
\end{align*}
For example, one could take \(\Ac = \{\pm 1\}\) and \(\Qc(\cdot) = \mathrm{sign}(\cdot)\). 
\cite{iwen2018recovery} proposes an algorithm for recovering $x$ from such measurements and shows that the associated error decays like \(O(m^{-1/7})\).
Such slow error decay, associated with MSQ, has also been seen in the context of sparse vector recovery in the compressed sensing literature.
Indeed, it is known in that setting that the error under any reconstruction scheme using MSQ measurements cannot decay faster than \(O(m^{-1})\) \cite{goyal1995quantization} (see also \cite{boufounosJKS14}).
So, to acheive better error rates one must use more sophisticated quantization schemes.
For example, in the sparse vector setting noise shaping techniques such as \(\Sigma\Delta\) and distributed noise shaping leverage redundancy of the measurements to ensure error decay like \(O(m^{-r})\) or \(O(\beta^{-cm})\) for some parameters \(r \in \N\), \(\beta > 1\) that depend on the quantization scheme, e.g., \cite{chou2016distributed, saab2018compressed}.
As we will also use these schemes, we now briefly describe them.

Each of the quantization methods mentioned above employs a state variable \(u \in \R^m\) and quantizes measurements in a recursive fashion:
    \(q_j = \Qc\left(f(y_j,\hdots, y_1, u_{j-1}, \hdots, u_1)\right),\)
where \(f\) is some function designed for the quantization scheme.
The state variable is then updated via the state relation
\(Ax - q = Hu,\)
where \(H: \R^m \to \R^m\) is a lower-triangular Toeplitz matrix.
Important for the analysis (and for practical reasons) is that \(H, f\) are chosen so that whenever \(\|x\|_{\infty}\) is bounded, we have \(\|u\|_{\infty} < C\); the upper bound \(C\) is often referred to as the stability constant of the quantization scheme.
For a more detailed explanation of these noise shaping techniques, the interested reader may refer for example to \cite{huynh2018fast}.
For the sake of expositional simplicity, we will only consider the \(\Sigma\Delta\) setting, but our arguments work for distributed noise shaping under very minor adjustments.


\section{Problem Formulation and Notation}
For an integer \(\ell\),  \([\ell] := \{1, \hdots, \ell\}\). We use \(\gtrsim\) and \(\lesssim\) for inequalities that hold up to a constant; subscripts indicate  the constant depends on a specified parameter.
Let \(K \subset B^{N}_2\) be a \(d\)-dimensional submanifold of the unit \(\ell_2\)-ball in \(\R^N\).
We assume that we have a GMRA of \(K\), which we make precise below.
First, for a set \(T \subset \R^N\) and \(\rho > 0\), define
\begin{align*}
    \mathrm{tube}_{\rho}(T) := \left\{x \in \R^N \; \colon \; \left.\inf\right._{y\in T}\|x - y\|_2 \leq \rho\right\}.
\end{align*}
\begin{defn}[\cite{iwen2013approximation}]\label{def: GMRA}
    Let \(J \in \N\) and \(K_0, \hdots, K_J \in \N\).
    A \textit{GMRA} of \(K\) is a collection \(\{(\Cc_j, \Pc_j)\}_{j \in [J]}\) of centers \(\Cc_{j} = \{c_{j,k}\}_{k \in [K_j]}\) and affine projections 
    \begin{align*}
        \Pc_j = \left\{P_{j,k}\colon \R^N \to \R^N \; \colon \; k \in [K_j]\right\}
    \end{align*}
    with the following properties:
    \begin{enumerate}[leftmargin=*,labelsep=1pt]
    \item\textbf{Affine Projections}.
    Every \(P_{j,k}\) is an orthogonal projection onto some \(d\)-dimensional affine space which contains the center \(c_{j,k}\).
    \item \textbf{Dyadic Structure}.
    The number of centers at each level is bounded by \(|\Cc_j| = K_{j} \leq C_C 2^{dj}\) for an absolute constant \(C_C \geq 1\).
    Moreover, there exist \(C_1 > 0\), \(C_2 \in (0,1]\) such that
    \begin{enumerate}[leftmargin=*]
        \item \(K_{j} \leq K_{j+1}\) for all \(j \in [J-1]\),
        \item \(\|c_{j, k_1} - c_{j, k_2}\|_2 > C_1 2^{-j}\) for all \(j \in [J]\), \(k_1 \neq k_2 \in [K_j]\),
        \item For each \(j \in [J]\setminus\{0\}\) there exists a parent function \(p_j\colon [K_j] \to [K_{j-1}]\) with
        \begin{align*}
            \| c_{j,k} - c_{j-1, p_j(k)}\|_2 
            \leq C_2 \min_{k'\in [K_{j-1}]\setminus\{p_j(k)\}} \| c_{j,k} - c_{j-1, k'}\|_2.
        \end{align*}
    \end{enumerate}
    \item \textbf{Multiscale Approximation}.
    The projectors in \(\Pc_j\) approximate \(K\) in the following sense:
    \begin{enumerate}[leftmargin=*,labelsep=1pt]
        \item There exists \(j_0 \in [J-1]\) such that \(c_{j,k} \in \mathrm{tube}_{C_1 2^{-j-2}}(K)\) for all \(j \geq j_0\) and \(k \in [K_j]\).
        \item For each \(j \in [J]\) and \(z \in \R^N\), let
        \begin{align*}
            c_{j, k_j(z)} \in \argmin_{c_{j,k} \in \Cc_j} \|z - c_{j,k}\|_2.
        \end{align*}
        Then for each \(z \in K\) there exist \(C_z, \widetilde{C}_z > 0\) so that 
         \(   \|z - P_{j,k_j(z)}z\|_2 \leq C_z 2^{-2j}\)
        for all \(j \in [J]\) and
        \begin{align}\label{GMRA part 3b}
            \|z - P_{j,k'}z\|_2 \leq \widetilde{C}_z 2^{-j}
        \end{align}
        whenever \(j \in [J]\) and \(k' \in [K_j]\) satisfy
        \begin{align*}
            \| z - c_{j,k'} \|_2 \leq 16 \max\big\{\|z- c_{j,k_j(z)}\|_2,\; C_1 2^{-j-1}\big\}.
        \end{align*}
    \end{enumerate}
    \end{enumerate}
\end{defn}

\noindent Let \(\{(\Cc_j, \Pc_{j,k})\}_j\) be a GMRA of a smooth compact manifold 
\(K\subset (1-\mu)B_2^N\)
for some \(\mu \in (0,1)\), and define the scale-\(j\) GMRA approximation \(\hat{K}_{j} := \{P_{j, k_j(z)}z\; \colon \; \|z\|_2 \leq 1 \} \cap B_2^N\).
We suppose that \(j_0\) is large enough so that \(\sup_{x\in K}\widetilde{C}_x 2^{-j_0} \leq \mu\) to ensure \(\{P_{j, k'} x \; \colon \; x\in K, \; (j,k') \; \text{as in  3.b of Definition \ref{def: GMRA}} \} \subset B_{2}^N\), and further assume that \(\mathrm{tube}_{C_1 2^{-j_0-2}}(K) \subset B^N_2\) which ensures 
\(\mathcal{C}_j \subset \hat{K}_j\) for \(j \geq j_0\). 
The number of measurements required for our theoretical guarantees to hold will depend on two notions of complexity of \(K\) and the GMRA. 
For \(g \sim \Nc(0, I_{N})\), define
\begin{align*}
     w(S) &:= \E \left.\sup\right._{x\in S} \langle g, x \rangle,\qquad
     \mathrm{rad}(S) := \left.\sup\right._{x\in S} \|x\|_2,
 \end{align*}
 and, for \(j \geq j_0\), define
\begin{align*}
    &C_K := \max\big\{C_1,\left.\sup\right._{z\in K} \widetilde{C}_z\big\}, 
    &S := K \cup \hat{K}_{j}.
\end{align*}
Now, let \(\Qc\)  be a stable \(r^{\text{th}}\) order \(\Sigma\Delta\) quantizer with stability constant \(C(r)\) and associated alphabet \(\Ac\).
Let \(x\in K\), \(A\in \R^{m\times N}\) be a standard Gaussian matrix (or a matrix drawn from a PCE/BOE), \(D_{\epsilon} \in \R^{N\times N}\) a diagonal matrix with random signs (independent of \(A\)) along the diagonal, \(\Phi := AD_{\epsilon}\) and
\begin{align*}
    q := \Qc\left(\Phi x\right).    
\end{align*}
\emph{Our goal, given \(q\) and \(\Phi\), is to approximate \(x\) and show that the associated error bounds decay fast as a function of \(m\).}

A useful fact is that the binary embedding provided by \(\Sigma\Delta\) quantization approximately preserves Euclidean distance via a related pseudo-metric on the quantized vectors, defined as follows.
For \(p \leq m\), define \(\lambda := m/p =: r\widetilde{\lambda} - r + 1\) and \(v\in \R^{\lambda}\) be the (row) vector whose \(j^{th}\) entry \(v_j\) is the \(j^{th}\) coefficient of the polynomial of \((1 + z + \hdots + z^{\widetilde{\lambda}})^r\).
Set \(\gamma = \|v\|_1/\|v\|_2\) and define \(\widetilde{V}\colon\R^m \to \R^p\) by
    \(\widetilde{V} = \frac{9}{8\|v\|_2 \sqrt{p}}I_{p} \otimes v,\)
where \(\otimes\) denotes to the Kronecker product.
Then the pseudo-metric is given by \(d_{\widetilde{V}}(x,y) := \|\widetilde{V}(x-y)\|_2\).

\section{Main Results}
We now present our recovery algorithm and its associated error guarantees.

\begin{algorithm}
\caption{Reconstruction Algorithm}\label{alg: main}
    \begin{algorithmic}
        \LState \textbf{Step 1:} Find \(c_{j,k'} \in \argmin_{c_{j,k} \in \Cc_j}
        \|\widetilde{V}(\Qc(\Phi c_{j,k}) - q )\|_2\).
        
        \LState \textbf{Step 2:} If \(\|\widetilde{V}(\Qc(\Phi c_{j,k'}) - q )\|_2=0\), set \(x^{\sharp}=c_{j,k'}\); else
        \begin{equation*}
            \begin{aligned}
                x^{\sharp} = &\argmin_{z \in \R^N}
                \left\|\widetilde{V}\left(\Phi z - q \right)\right\|_2 \text{  s.t. \(z = P_{j,k'}(z), \; \|z\|_2 \leq 1\)}.
            \end{aligned}
        \end{equation*}
    \end{algorithmic}
\end{algorithm}

\begin{thm}\label{thm: main}
    Suppose we have a GMRA of \(K\subset (1-\mu)B^{N}_2\) at level \(j \geq j_0\). Let \(\Qc\) be a stable \(r^{\text{th}}\) order \(\Sigma\Delta\) quantizer with \(r = O(j)\), associated alphabet \(\Ac = \{\pm(1-\alpha)\sqrt{m}\}\) where \(\alpha^{-1} \gtrsim \mathrm{rad}^2(S-S)2^{2(j+1)}\), and
    \begin{align}\label{eq: num msrments}
        m^\frac{r-1}{r-1/2} \gtrsim
        &\,\gamma^2 (1+\nu)^2\log^4(N)j^2 2^{4j}\\
        &\;\;\times\mathrm{rad}(S-S)^4\max\big\{1, w^2(S-S)\mathrm{rad}^{-2}(S-S)\big\}\nonumber.
    \end{align}
    Then with probability exceeding \(1-e^{-\nu}\), for all \(x\in K\), \(x^{\sharp}\) from Algorithm~\ref{alg: main} satisfies
    \begin{align*}
        \|x^{\sharp} - x\|_2 \lesssim_{r} \widetilde{C}_x 2^{-j}.
    \end{align*}
\end{thm}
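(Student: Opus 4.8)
The plan is to split the argument into a ``quantization/embedding'' layer, imported wholesale from \cite{huynh2018fast}, and a ``GMRA geometry'' layer that uses only Definition~\ref{def: GMRA}. On the probabilistic side, the binary--embedding results of \cite{huynh2018fast} for the ensemble generating $\Phi=AD_{\epsilon}$ (the random signs $D_{\epsilon}$ being exactly what lets the partial--circulant and bounded--orthonormal cases be handled alongside the Gaussian one) give, under \eqref{eq: num msrments} and with the prescribed alphabet $\Ac=\{\pm(1-\alpha)\sqrt{m}\}$, an event of probability at least $1-e^{-\nu}$ on which: (a) $\widetilde{V}\Phi$ is an approximate isometry with distortion $\delta\lesssim_{r}2^{-j}$ on $S-S$ and, as a restricted isometry, on each of the $K_{j}$ linear subspaces $P_{j,k}(\R^{N})-c_{j,k}$ underlying the affine pieces of the GMRA (the union bound over the $K_{j}\le C_{C}2^{dj}$ pieces only enlarges the relevant Gaussian width to a quantity of the same order as $w(S-S)$, which is why $w(S-S)$ governs \eqref{eq: num msrments}); (b) the $\Sigma\Delta$ residual satisfies $\|\widetilde{V}(\Phi z-\Qc(\Phi z))\|_{2}\le\epsilon_{q}\lesssim_{r}2^{-j}$ for every $z\in B_{2}^{N}$, the scaling $\alpha^{-1}\gtrsim\mathrm{rad}^{2}(S-S)2^{2(j+1)}$ being precisely the stability requirement keeping the $r^{\text{th}}$--order scheme well behaved at resolution $2^{-j}$; and hence (c) $\bigl|\,d_{\widetilde{V}}(\Qc(\Phi u),\Qc(\Phi v))-\|u-v\|_{2}\,\bigr|\lesssim_{r}2^{-j}$ for all $u,v\in S$. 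Condition \eqref{eq: num msrments} is exactly what results from choosing the number of noise--shaping blocks $p$ so as to balance the Johnson--Lindenstrauss--type requirement (roughly $p\gtrsim\gamma^{2}(1+\nu)\log^{4}(N)\,\mathrm{rad}^{2}(S-S)\max\{1,w^{2}(S-S)/\mathrm{rad}^{2}(S-S)\}\,2^{2j}$) against the $\Sigma\Delta$ decay $(m/p)^{1/2-r}\mathrm{rad}(S-S)\lesssim2^{-j}$; I expect most of the real effort to live in this bookkeeping rather than in the geometric steps below.

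Work on that event. For Step~1, put $c^{\ast}:=c_{j,k_{j}(x)}$. Since $\mathcal{C}_{j}\subset\hat{K}_{j}\subset S$ and $x\in K\subset S$, fact (c) shows Step~1 minimizes $k\mapsto\|c_{j,k}-x\|_{2}$ up to an additive $O_{r}(2^{-j})$, so $\|c_{j,k'}-x\|_{2}\le\|c^{\ast}-x\|_{2}+O_{r}(2^{-j})$. Combined with the GMRA covering estimate $\|c^{\ast}-x\|_{2}\lesssim2^{-j}$ for $x\in K$ (a consequence of the dyadic structure; cf.\ \cite{iwen2013approximation,iwen2018recovery}) this gives $\|c_{j,k'}-x\|_{2}\lesssim_{r}2^{-j}$, and---since the implicit constant in \eqref{eq: num msrments} may be enlarged to make the $O_{r}(2^{-j})$ term an arbitrarily small multiple of $2^{-j}$---one checks directly that $\|x-c_{j,k'}\|_{2}\le16\max\{\|x-c^{\ast}\|_{2},\,C_{1}2^{-j-1}\}$. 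Thus \eqref{GMRA part 3b} applies at $(j,k')$ and yields $\|x-P_{j,k'}x\|_{2}\le\widetilde{C}_{x}2^{-j}$; the standing hypothesis on $j_{0}$ then guarantees $\|P_{j,k'}x\|_{2}\le1$, so $P_{j,k'}x$ is feasible for Step~2 (as is $c_{j,k'}$, which lies in its own affine piece and in $B_{2}^{N}$).

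For Step~2: if the Step~1 objective is $0$ we output $x^{\sharp}=c_{j,k'}$ and conclude from the previous paragraph, so assume it is positive. Optimality of $x^{\sharp}$ against the competitor $P_{j,k'}x$, then (b) at $z=x$ and the (upper) isometry applied to $P_{j,k'}x-x\in S-S$ (legitimate because $\|P_{j,k'}x-c_{j,k'}\|_{2}\le\|x-c_{j,k'}\|_{2}\lesssim_{r}2^{-j}$ places $P_{j,k'}x$ in $\hat{K}_{j}$), give
\begin{align*}
\|\widetilde{V}(\Phi x^{\sharp}-q)\|_{2}
&\le\|\widetilde{V}(\Phi P_{j,k'}x-q)\|_{2}\\
&\le(1+\delta)\|P_{j,k'}x-x\|_{2}+\epsilon_{q}
\;\lesssim_{r}\;\widetilde{C}_{x}2^{-j}.
\end{align*}
Decomposing $x^{\sharp}-x=(x^{\sharp}-P_{j,k'}x)+(P_{j,k'}x-x)$, the first summand lies in the linear subspace underlying the $k'$--th affine piece, so the restricted lower isometry from (a), the triangle inequality, the displayed bound, and (b) give
\begin{align*}
\|x^{\sharp}-P_{j,k'}x\|_{2}
&\le(1-\delta)^{-1}\bigl\|\widetilde{V}\Phi(x^{\sharp}-P_{j,k'}x)\bigr\|_{2}\\
&\le(1-\delta)^{-1}\bigl(\|\widetilde{V}(\Phi x^{\sharp}-q)\|_{2}+\|\widetilde{V}(\Phi P_{j,k'}x-q)\|_{2}\bigr)\\
&\lesssim_{r}\widetilde{C}_{x}2^{-j}.
\end{align*}
Adding $\|P_{j,k'}x-x\|_{2}\le\widetilde{C}_{x}2^{-j}$ gives $\|x^{\sharp}-x\|_{2}\lesssim_{r}\widetilde{C}_{x}2^{-j}$, as claimed.

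The main obstacle is the quantitative matching of three scales: the GMRA resolution $2^{-j}$, the embedding distortion $\delta$, and the $\Sigma\Delta$ residual $\epsilon_{q}$. One must choose the block count $p$ (equivalently the oversampling ratio $\lambda=m/p$) so that $\delta,\epsilon_{q}\lesssim_{r}2^{-j}$ simultaneously---this is the content of \eqref{eq: num msrments}---\emph{and} one must check that every point the algorithm produces ($c_{j,k'}$, $P_{j,k'}x$, and the minimizer $x^{\sharp}$ over the slab $\{z=P_{j,k'}z,\ \|z\|_{2}\le1\}$) lands in the set on which \cite{huynh2018fast} certifies the embedding; this is what forces one to drag the GMRA affine pieces into the embedding and to invoke \eqref{GMRA part 3b} at a center $k'$ that is only approximately nearest to $x$. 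The remaining ingredients---the triangle inequalities, passing from $d_{\widetilde{V}}$ to $\|\cdot\|_{2}$, the feasibility checks, and the degenerate case $\|\widetilde{V}(\Qc(\Phi c_{j,k'})-q)\|_{2}=0$---are routine once those two points are settled.
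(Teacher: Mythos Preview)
Your proposal is correct and follows essentially the same route as the paper: split via $\|x^\sharp-x\|_2\le\|x^\sharp-P_{j,k'}x\|_2+\|P_{j,k'}x-x\|_2$, control the second summand by showing the Step~1 center $c_{j,k'}$ satisfies the hypothesis of part~3(b) of Definition~\ref{def: GMRA} (the paper's Lemma~\ref{lem: indices}), and control the first via optimality of $x^\sharp$, feasibility of $P_{j,k'}x$, and the $\widetilde V\Phi$ embedding/residual bounds from \cite{huynh2018fast} (the paper's Lemma~\ref{lem: project}). The one substantive variation is the lower--isometry step: the paper asserts $x^\sharp,P_{j,k'}x\in S$ and quotes equation~(5.8) of \cite{huynh2018fast}, while you instead note that $x^\sharp-P_{j,k'}x$ lies in the $d$-dimensional linear space underlying the $k'$-th affine piece and invoke an RIP there via a union bound over the $K_j\le C_C2^{dj}$ pieces---a slightly cleaner device, since membership of the unconstrained minimizer $x^\sharp$ in $\hat K_j$ is not entirely obvious from the definition.
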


\begin{proof}
    By the triangle inequality we have
    \begin{align*}
        \|x^{\sharp} - x\|_2 \leq \|x^{\sharp} - P_{j,k'} x\|_2 + \|P_{j,k'} x - x\|_2.
    \end{align*}
    Both terms are bounded by \(\widetilde{C}_x 2^{-j}\), the first by Lemma~\ref{lem: project} and the second by Lemma \ref{lem: indices} and Equation~\eqref{GMRA part 3b} from Definition~\ref{def: GMRA}.
\end{proof}

\begin{remark}
    As Lemma~4.3 of  \cite{iwen2018recovery} shows, \(w(S-S) \lesssim w(K) + \sqrt{d j}\). This is a suitable bound for coarse GMRA scales, i.e. \(j \lesssim \log(N)\). However, for \(j \gtrsim \log(N)\) one can slightly modify the definition of \(S\) and use the bound \(w(S-S) \lesssim (w(K)+1)\log(N)\) as proven in Lemma~4.5 of \cite{iwen2018recovery}, albeit this requires some modifications to the proof of Theorem \ref{thm: main}. Please see Remark~4.15 of \cite{iwen2018recovery} for more details, as we shall leave the latter case for future work.
\end{remark}




\begin{lem}\label{lem: indices}
    Under the  assumptions of Theorem~\ref{thm: main}, with probability exceeding \(1-e^{-\nu}\), for all \(x\in K\), the center \(c_{j,k'}\) chosen in Step 1 of Algorithm~\ref{alg: main} satisfies
    \begin{align}\label{eq: indices lem}
        \|x-c_{j,k'}\|_2 \leq 16 \max\big\{ \|x - c_{j, k_j(x)}\|_2, C_1 \cdot 2^{-j-1}\big\}.
    \end{align}
\end{lem}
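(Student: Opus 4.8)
The plan is to derive the lemma directly from the approximate-isometry property of the $\Sigma\Delta$ binary embedding — the result of \cite{huynh2018fast} that makes the pseudo-metric $d_{\widetilde V}$ meaningful — combined with the minimality built into Step~1 of Algorithm~\ref{alg: main}. First I would record the two elementary containments $x\in K\subseteq S$ and, since $j\geq j_0$, $c_{j,k}\in\Cc_j\subseteq\hat K_j\subseteq S$ for every center; the latter is precisely what the standing hypothesis $\mathrm{tube}_{C_1 2^{-j_0-2}}(K)\subseteq B_2^N$ is there to guarantee. Then I would invoke the embedding theorem of \cite{huynh2018fast}: for $\Phi=AD_\epsilon$ with $A$ Gaussian (or drawn from a PCE/BOE), $\Qc$ a stable $r^{\text{th}}$-order $\Sigma\Delta$ quantizer with alphabet $\{\pm(1-\alpha)\sqrt m\}$, and $m$ at least as large as \eqref{eq: num msrments} demands, with probability at least $1-e^{-\nu}$ one has, uniformly over all $u,v\in S$,
\[
\bigl|\,\|\widetilde V(\Qc(\Phi u)-\Qc(\Phi v))\|_2-\|u-v\|_2\,\bigr|\;\leq\;\delta,
\]
where the additive distortion $\delta$ can be forced below $C_1 2^{-j-2}$ by the stated choices: the order $r=O(j)$ together with the alphabet-fineness/stability condition $\alpha^{-1}\gtrsim\mathrm{rad}^2(S-S)2^{2(j+1)}$ controls the noise-shaping part of $\delta$, while \eqref{eq: num msrments} controls the part governed by $\gamma$, $w(S-S)$, $\mathrm{rad}(S-S)$ and $\log N$. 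Since the estimate is uniform over pairs in $S$, it holds simultaneously for every $x\in K$.

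Everything after that is a short deterministic chain. Because $q=\Qc(\Phi x)$ and $c_{j,k'}$ is chosen in Step~1 to minimize $\|\widetilde V(\Qc(\Phi c_{j,k})-q)\|_2$ over $k\in[K_j]$, in particular $\|\widetilde V(\Qc(\Phi c_{j,k'})-\Qc(\Phi x))\|_2\leq\|\widetilde V(\Qc(\Phi c_{j,k_j(x)})-\Qc(\Phi x))\|_2$. Applying the lower bound of the embedding estimate to the left side and the upper bound to the right side gives $\|c_{j,k'}-x\|_2-\delta\leq\|c_{j,k_j(x)}-x\|_2+\delta$, hence
\[
\|c_{j,k'}-x\|_2\;\leq\;\|c_{j,k_j(x)}-x\|_2+2\delta .
\]
With $2\delta\leq C_1 2^{-j-1}$ the right-hand side is at most $2\max\bigl\{\|x-c_{j,k_j(x)}\|_2,\,C_1 2^{-j-1}\bigr\}\leq 16\max\bigl\{\|x-c_{j,k_j(x)}\|_2,\,C_1 2^{-j-1}\bigr\}$, which is exactly \eqref{eq: indices lem}; the generous factor $16$ leaves ample slack for a cruder accounting of the constants in $\delta$.

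The main obstacle is the first step: invoking the embedding result of \cite{huynh2018fast} in precisely the form stated above and checking that its hypotheses follow from the assumptions of Theorem~\ref{thm: main}. Concretely one must (i) verify the stability / no-overflow condition for the $\Sigma\Delta$ scheme on all inputs $\Phi u$, $u\in S$ — this is where $\mathrm{rad}(S-S)$, the $\sqrt m$ scaling of the alphabet, and the condition on $\alpha$ enter, via bounds on $\|AD_\epsilon u\|_\infty$; (ii) decompose the embedding's additive error into a noise-shaping term of order $(m/p)^{-(r-1/2)}$ and a term controlled by $\gamma$, $w(S-S)$, $\mathrm{rad}(S-S)$ and $\log N$, then confirm that \eqref{eq: num msrments} together with $r=O(j)$ drives the total below $C_1 2^{-j-2}$; and (iii) ensure the high-probability event is phrased uniformly over $S$, so that the choice of $c_{j,k'}$ is controlled simultaneously for all $x\in K$. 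Once that quotation is in place, Steps~2--3 above finish the proof with no further randomness.
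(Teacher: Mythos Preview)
Your proposal is correct and follows essentially the same route as the paper: invoke the $\Sigma\Delta$ binary-embedding result of \cite{huynh2018fast} on $S$, use the minimality of $c_{j,k'}$ from Step~1 to compare against $c_{j,k_j(x)}$, and then choose the parameters so that the additive distortion is dominated by $C_1 2^{-j-1}$. The paper makes the distortion explicit as $\max\{\sqrt\alpha,\alpha\}\,\mathrm{rad}(S-S)+e^{-c\sqrt\lambda}$ (via the specific choice $r=\lfloor\lambda/(2Ce)\rfloor^{1/2}$) and absorbs the constants into a $\lesssim_r$, whereas you abstract it as a single $\delta$ and defer the decomposition to your points~(i)--(iii); but the structure and the key citation are the same.
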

\begin{proof}
    Theorem~5.2 and Remarks~3,~5 of \cite{huynh2018fast} state that if \(m^{\frac{r-1}{r-1/2}} \geq p \gtrsim \gamma^2(1+\nu)^2\log^4(N)\frac{\max\{1, w^2(S-S)\mathrm{rad}^{-2}(S-S)\}}{\alpha^2}\) and we choose \(r = \lfloor \frac{\lambda}{2Ce} \rfloor^{1/2}\), where \(\lambda = m^{\frac{r-1}{r-1/2}}/p\) and \(C > 0\), then with probability exceeding \(1-e^{-\nu}\)
    \begin{align*}
        &\left| d_{\widetilde{V}}\left(\Qc(\Phi c_{j,k}),q\right) - \|c_{j,k} - x\|_2\right| \\ &\quad\lesssim \max\big\{\sqrt{\alpha},\alpha\big\}\mathrm{rad}(S-S) + e^{-c \sqrt{\lambda}}
    \end{align*}
    for all \(c_{j,k} \in \Cc_{j}\). Conditioning on this, we have
    \begin{align*}
         &d_{\widetilde{V}}\left(\Qc(\Phi c_{j, k'}), q\right) \leq d_{\widetilde{V}}\big(\Qc(\Phi c_{j, k_j(x)}), q\big) \\&
        \implies\|c_{j,k'} - x\|_2 \lesssim_{r} \|c_{j,k_j(x)} - x\|_2 \\  & \qquad\qquad\qquad\qquad + \max\big\{\sqrt{\alpha}, \alpha\big\}\mathrm{rad}(S-S) + e^{-c\sqrt{\lambda}}.
    \end{align*}
    For \(c_{j,k'}\) to satisfy~\eqref{eq: indices lem} (i.e., part 3.b of Definition~\ref{def: GMRA}), it suffices to choose \(\alpha\) small and \(\lambda\) large, so that \(
        \max\{\sqrt{\alpha}, \alpha\}\mathrm{rad}(S-S) + e^{-c \sqrt{\lambda}} \leq 15 C_1 2^{-j-1}.\)
    Choosing \(\lambda \gtrsim j^2\) (hence, \(r = O(j)\)) and  \(\alpha^{-2} \gtrsim \mathrm{rad}^4(S-S)2^{4(j+1)}\) realizes the above bound. 
\end{proof}

\begin{lem}\label{lem: project}
    Under the assumptions of Theorem~\ref{thm: main}, with probability exceeding \(1-e^{-\nu}\), for all \(x\in K\), \(x^{\sharp}\) from Algorithm~\ref{alg: main} satisfies \(\|x^{\sharp} - P_{j,k'}x\|_2 \lesssim_{r} \widetilde{C}_x 2^{-j}\).
\end{lem}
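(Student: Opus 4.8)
The plan is to extract the bound directly from the optimality defining $x^{\sharp}$ in Step~2 of Algorithm~\ref{alg: main}, exploiting that $\widetilde{V}$ smooths away the $\Sigma\Delta$ state variable while $\widetilde{V}\Phi$ acts as a two-sided near-isometry on the relevant low-complexity sets. First I would check that $P_{j,k'}x$ is feasible for the Step~2 program: the constraint $z = P_{j,k'}(z)$ is automatic since $P_{j,k'}$ is idempotent, and on the event of Lemma~\ref{lem: indices} the pair $(j,k')$ satisfies the hypothesis of part~3.b of Definition~\ref{def: GMRA}, so $\|x - P_{j,k'}x\|_2 \le \widetilde{C}_x 2^{-j} \le \widetilde{C}_x 2^{-j_0} \le \mu$ by the standing assumption on $j_0$, whence $\|P_{j,k'}x\|_2 \le \|x\|_2 + \mu \le 1$. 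Thus in the ``else'' branch the definition of $x^{\sharp}$ gives $\|\widetilde{V}(\Phi x^{\sharp} - q)\|_2 \le \|\widetilde{V}(\Phi P_{j,k'}x - q)\|_2$, while in the ``if'' branch $x^{\sharp} = c_{j,k'} \in \mathrm{range}(P_{j,k'}) \cap B_2^N$ and, writing $Hu'$ for the $\Sigma\Delta$ residual incurred when quantizing $\Phi c_{j,k'}$ (so $\Phi c_{j,k'} - \Qc(\Phi c_{j,k'}) = Hu'$ with $\|u'\|_\infty \le C(r)\sqrt{m}$ by stability), one has $\|\widetilde{V}(\Phi x^{\sharp} - q)\|_2 = \|\widetilde{V}Hu'\|_2$. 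Hence, with $\eta := \sup\{\|\widetilde{V}Hw\|_2 : \|w\|_\infty \le C(r)\sqrt{m}\}$, in both cases
\begin{align*}
    \|\widetilde{V}(\Phi x^{\sharp} - q)\|_2 \le \|\widetilde{V}(\Phi P_{j,k'}x - q)\|_2 + \eta.
\end{align*}

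Next I would insert the state relation $\Phi x - q = Hu$ with $\|u\|_\infty \le C(r)\sqrt{m}$. Since $\widetilde{V}(\Phi z - q) = \widetilde{V}\Phi(z-x) + \widetilde{V}Hu$ for every $z$, two triangle inequalities give
\begin{align*}
    \|\widetilde{V}\Phi(x^{\sharp} - P_{j,k'}x)\|_2 &\le \|\widetilde{V}(\Phi x^{\sharp} - q)\|_2 + \|\widetilde{V}(\Phi P_{j,k'}x - q)\|_2 \\ &\le 2\|\widetilde{V}\Phi(P_{j,k'}x - x)\|_2 + 3\eta,
\end{align*}
so it remains to bound $\eta$, the first term on the last line, and $\|\widetilde{V}\Phi(x^{\sharp} - P_{j,k'}x)\|_2$ from below. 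For $\eta$: since $v$ is an $r$-fold convolution of an all-ones vector it annihilates polynomial sequences of degree below $r$, so the rows of $\widetilde{V}H$ have small $\ell_1$-norm and are supported on essentially disjoint blocks; the estimate behind Theorem~5.2 and Remarks~3,~5 of \cite{huynh2018fast} then gives $\eta \lesssim_{r} \max\{\sqrt{\alpha},\alpha\}\mathrm{rad}(S-S) + e^{-c\sqrt{\lambda}}$, the same $\Sigma\Delta$ error appearing in the proof of Lemma~\ref{lem: indices}, which the parameter choices of Theorem~\ref{thm: main} ($\lambda \gtrsim j^2$ and $\alpha^{-1} \gtrsim \mathrm{rad}^2(S-S)2^{2(j+1)}$) force to be $\lesssim 2^{-j}$. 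For the middle term: $P_{j,k'}x - x \in S - S$, so under \eqref{eq: num msrments} the results of \cite{huynh2018fast} make $\widetilde{V}\Phi$ a near-isometry on $S - S$, whence $\|\widetilde{V}\Phi(P_{j,k'}x - x)\|_2 \lesssim \|P_{j,k'}x - x\|_2 \le \widetilde{C}_x 2^{-j}$ by Lemma~\ref{lem: indices} and \eqref{GMRA part 3b}. For the lower bound: $x^{\sharp}$ and $P_{j,k'}x$ both lie in the affine set $\mathrm{range}(P_{j,k'})$, so $x^{\sharp} - P_{j,k'}x$ lies in the associated $d$-dimensional linear subspace, and a union bound over the $\le C_C 2^{dj}$ GMRA subspaces (whose joint complexity is again absorbed by \eqref{eq: num msrments}) shows $\widetilde{V}\Phi$ is bounded below by a constant on each of them, giving $\|\widetilde{V}\Phi(x^{\sharp} - P_{j,k'}x)\|_2 \gtrsim \|x^{\sharp} - P_{j,k'}x\|_2$. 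Chaining these three estimates yields $\|x^{\sharp} - P_{j,k'}x\|_2 \lesssim_{r} \widetilde{C}_x 2^{-j}$, after enlarging $\widetilde{C}_x$ to $\max\{\widetilde{C}_x, C_1, 1\}$ if necessary (which only weakens \eqref{GMRA part 3b} and is harmless).

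The main obstacle is that all of this must hold simultaneously for every $x \in K$ on a single event of probability at least $1 - e^{-\nu}$. Unlike Lemma~\ref{lem: indices}, which only needs $\widetilde{V}\Phi$ to behave well on differences of points of $S = K \cup \hat{K}_{j}$, here one additionally needs uniform control of $\widetilde{V}\Phi$ on the optimization variable $x^{\sharp}$ and on the comparison point $P_{j,k'}x$, both of which range over the $d$-dimensional affine slices $\mathrm{range}(P_{j,k}) \cap B_2^N$, $k \in [K_j]$; verifying that these extra sets cost no more than the Gaussian width $w(S-S)$ already built into \eqref{eq: num msrments}, and then threading $\alpha$, $\lambda$ (hence $r = O(j)$) and $m$ through the $\Sigma\Delta$ stability constant and alphabet scaling so that $\eta$ and the near-isometry error both land at the $2^{-j}$ scale, is the real content of the argument.
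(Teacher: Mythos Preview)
Your proposal is correct and follows the same skeleton as the paper: feasibility of $P_{j,k'}x$, optimality of $x^{\sharp}$, a triangle-inequality upper bound on $\|\widetilde{V}\Phi(x^{\sharp}-P_{j,k'}x)\|_2$, and a near-isometry lower bound to finish. The decomposition differs slightly. The paper introduces the auxiliary quantized vector $q^{*}:=\Qc(\Phi P_{j,k'}x)$ and splits
\[
\|\widetilde{V}(\Phi P_{j,k'}x - q)\|_2 \le \|\widetilde{V}(\Phi P_{j,k'}x - q^{*})\|_2 + d_{\widetilde{V}}(q,q^{*}),
\]
then bounds the first piece by Lemma~4.5 of \cite{huynh2018fast} (the $\Sigma\Delta$ residual at $P_{j,k'}x$) and the second by Theorem~5.2 of \cite{huynh2018fast} (the quantized binary embedding between $x$ and $P_{j,k'}x$). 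You instead insert the raw state relation $\Phi x - q = Hu$ and split into $\|\widetilde{V}\Phi(P_{j,k'}x-x)\|_2$ plus your $\eta$; this trades the quantized-embedding black box for the unquantized near-isometry of $\widetilde{V}\Phi$ on $S-S$ together with a direct $\|\widetilde{V}H\|$ estimate. Both routes land on the same $\widetilde{C}_x 2^{-j}$ bound; the paper's is a bit cleaner because the $\Sigma\Delta$ residual is already packaged inside the cited embedding theorem, while yours is more explicit and also handles the ``if'' branch of Step~2 separately (the paper does not).

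For the lower bound the paper simply asserts $x^{\sharp},\,P_{j,k'}x\in S$ and invokes Equation~(5.8) of \cite{huynh2018fast} to get $\|\widetilde{V}\Phi(x^{\sharp}-P_{j,k'}x)\|_2 \gtrsim \|x^{\sharp}-P_{j,k'}x\|_2 - 2^{-j}$, whereas you flag the membership question and propose a union bound over the $\le C_C 2^{dj}$ GMRA affine subspaces. Your caution is warranted (whether every point of $\mathrm{range}(P_{j,k'})\cap B_2^N$ actually lies in $\hat K_j$ as defined is not entirely transparent), and the union-bound route resolves it cleanly at no extra cost in $m$ since the added complexity is of the same order already present in \eqref{eq: num msrments}.
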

\begin{proof}
    By optimality of \(x^{\sharp}\) and feasibility of \(P_{j,k'}x\), the triangle inequality gives us
    \begin{align*}
        0 &\leq \left\|\widetilde{V}\left(\Phi P_{j,k'}x - q \right)\right\|_2 - \Big\|\widetilde{V}\big(\Phi x^{\sharp} - q \big)\Big\|_2\\
        &\leq 2\left\|\widetilde{V}\left(\Phi P_{j,k'}x - q \right)\right\|_2 - \Big\|\widetilde{V}\Phi \big(x^{\sharp} - P_{j,k'}x \big)\Big\|_2.
    \end{align*}
    Define \(q^* := \Qc\left(\Phi P_{j,k'}x\right)\).
    Then
    \begin{align*}
        \Big\|\widetilde{V}\Phi \big(x^{\sharp} - P_{j,k'}x \big)\Big\|_2 
        &\leq 2\left\|\widetilde{V}\left(\Phi P_{j,k'}x - q^* + q^* - q \right)\right\|_2\\
        &\leq 2 \left\|\widetilde{V}\left(\Phi P_{j,k'}x - q^*\right)\right\|_2 + 2d_{\widetilde{V}}(q, q^*).
    \end{align*}
    Lemma~4.5 of \cite{huynh2018fast} states that \(\|\widetilde{V}(\Phi P_{j,k'}x - q^*)\|_2 \lesssim_{r} e^{-c\sqrt{\lambda}} \lesssim_{r} 2^{-j}\), while Theorem~5.2 of \cite{huynh2018fast} and~\eqref{GMRA part 3b} (via Lemma~\ref{lem: indices}) imply \(d_{\widetilde{V}}(q, q^*) \lesssim \widetilde{C}_x 2^{-j}\) with probability exceeding \(1-e^{-\nu}\).
    Therefore, we have
    \begin{align}\label{x_sharp-Px small}
         \Big\|\widetilde{V}\Phi \big(x^{\sharp} - P_{j,k'}x \big)\Big\|_2 &\lesssim_{r}  \widetilde{C}_x 2^{-j}.
    \end{align}
    By the definition of \(S\), we have \(x^{\sharp}, P_{j,k'}x \in S\).
    Equation~(5.8) in the proof of Theorem~5.2 of \cite{huynh2018fast} states that with probability exceeding \(1-e^{-\nu}\)
         \(\Big\|\widetilde{V}\Phi \big(x^{\sharp} - P_{j,k'}x \big)\Big\|_2
         \gtrsim \big\|x^{\sharp} - P_{j,k'}x\big\|_2 - 2^{-j}.\)
    With~\eqref{x_sharp-Px small}, this yields
    \(\|x^{\sharp} - P_{j,k'}x\|_2 \lesssim_r  \widetilde{C}_x 2^{-j}\).
\end{proof}

\begin{remark}\label{rem:err_poly}
If one does not choose \(r\) and \(\alpha\) as in the proofs above, but works with \(r\) large enough and \(\alpha\) small enough,  a version of Theorem~\ref{thm: main} with
 $   \|x^\sharp-x\|_2
    \lesssim \widetilde{C}_x 2^{-j} + \max\{\sqrt{\alpha},\alpha\}\mathrm{rad}(S-S) + (m/p)^{-r+1/2}$
holds for $m$, \(p\) as in the proof of Lemma~\ref{lem: indices}.
\end{remark}

\begin{remark}
    When \(\Sigma\Delta\) quantization is replaced by distributed noise shaping, Theorem~\ref{thm: main}  holds with  \(j^2 2^{4j}\)  in the lower bound \eqref{eq: num msrments}  replaced by \(j 2^{4j}\), and  Remark~\ref{rem:err_poly} holds with \(\beta^{-m/p}\) replacing \((m/p)^{-r+1/2}\).
\end{remark}

\section{Numerical Simulations}
To simulate Algorithm~\ref{alg: main}, we take \(K=\S^2\) embedded in \(\R^{20}\) and construct a GMRA up to level \(j_{\max}=15\) using 20,000 data points sampled uniformly from \(K\).
We randomly select a test set of 100 points \(x\in K\) for use throughout all experiments.
In each experiment (i.e., point in Figure~\ref{fig: error}), compressed sensing measurements \(y=\Phi x=m^{-1/2}AD_{\epsilon}x\) are taken for each test point, with \(A\sim\Nc(0,I_{m\times N})\) and \(D_{\epsilon}\) a diagonal \(N\times N\) matrix of random \(\pm 1\)s.
We recover $x^\sharp$ from the \(r^{\text{th}}\) order \(\Sigma\Delta\) measurements \(\Qc(y)\) via Algorithm~\ref{alg: main} where, for practical reasons, the alphabet from Theorem~\ref{thm: main} is modified to be \(\mathcal{A}=\{\pm1\}\).
We vary \(\lambda=m/p\) for fixed \(r\), \(p\), and refinement scale \(j\).
As in Remark~\ref{rem:err_poly}, the reconstruction error decays as a function of \(\lambda\) until reaching a floor due to the refinement level of the GMRA.


\begin{figure}[h]
    \centering
    \includegraphics[width=\linewidth]{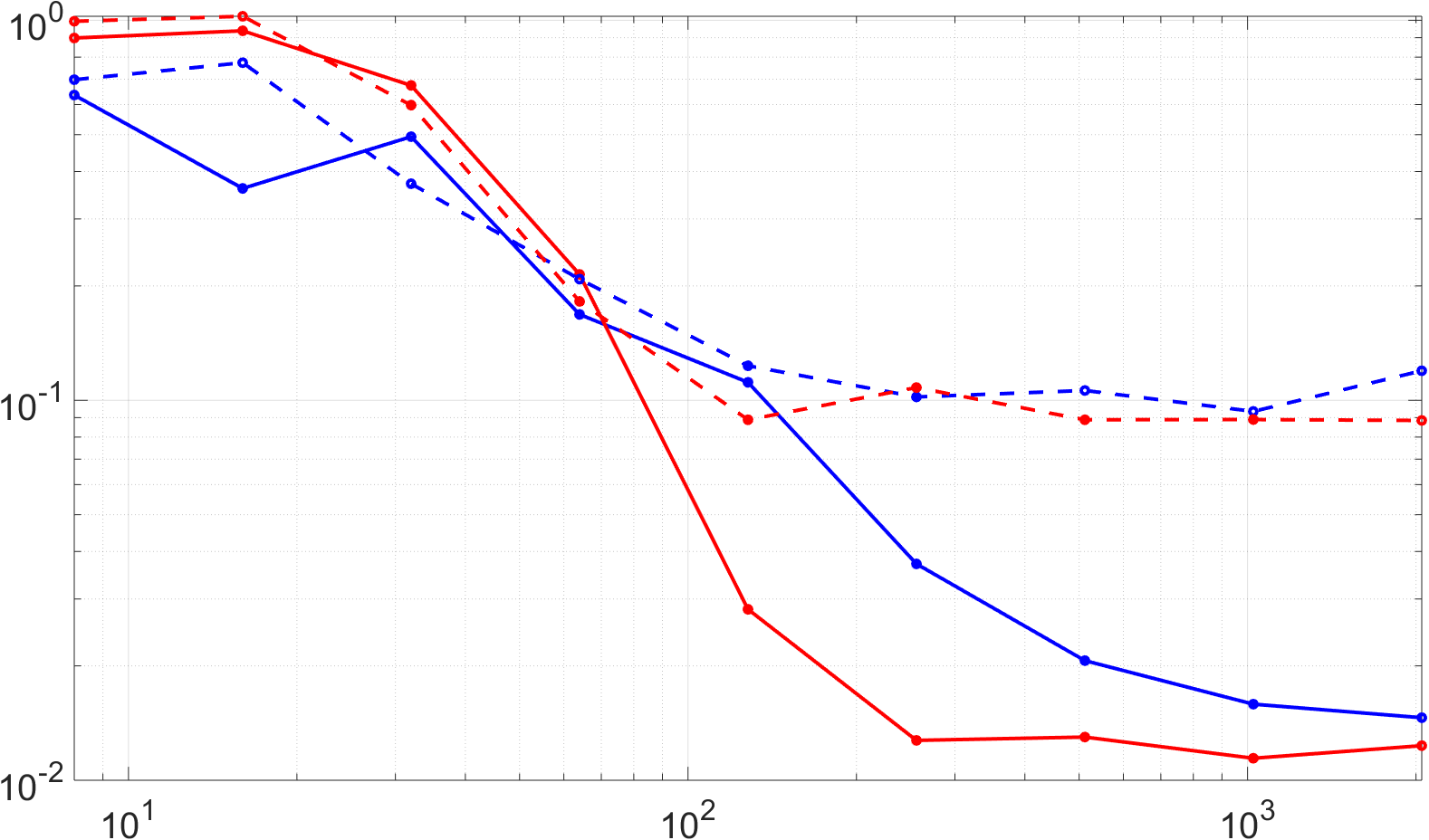}
    \caption{Log-scale plot of average relative reconstruction error from Algorithm~\ref{alg: main} as a function of \(\lambda=m/p\) for \(p=10\).
    Solid lines correspond to GMRA refinement level \(j=12\); dashed lines to  \(j=6\).
    Blue and red plots represent \(r=2,4\) (resp.).
    For each \(j\), reconstruction error decays as a function of \(\lambda\) until reaching a floor due to error in the GMRA approximation of \(K\).
    }
    \label{fig: error}
\end{figure}

\section{Acknowledgements}
We thank F.~Krahmer, S.~Krause-Solberg, and J.~Maly for  sharing their GMRA code, which they adapted from that provided by M.~Maggioni.
M.~A.~Iwen was supported in part by NSF CCF-1615489;
R.~Saab was supported in part by NSF DMS-1517204.
\bibliographystyle{plain}
\bibliography{citations}

\end{document}